\makeatletter
\newcommand{\linebreakand}{%
  \end{@IEEEauthorhalign}
  \hfill\mbox{}\par
  \mbox{}\hfill\begin{@IEEEauthorhalign}
}
\makeatother
\documentclass[conference]{IEEEtran}
\IEEEoverridecommandlockouts
\usepackage{cite}
\usepackage{amsmath,amssymb,amsfonts}
\usepackage{algorithmic}
\usepackage{graphicx}
\usepackage{textcomp}
\usepackage{xcolor}

\usepackage{url}            % 更好的网址显示
\usepackage{listings}       % 代码块
\usepackage{xurl}
\usepackage{makecell}       % 高级单元格
\usepackage{tabularx}       % 弹性宽度表格
\usepackage{threeparttable} % 表格注脚
\usepackage{array}          % 表格增强
\usepackage{booktabs}       % 三线表
\usepackage{textcomp}       % 提供“直引号”符号支持
\usepackage[T1]{fontenc}    % 使用 T1 字体编码，修正字符映射
\usepackage[varqu]{zi4}     % 加载 Inconsolata 字体

\usepackage[hidelinks,bookmarks=false]{hyperref}

\newtheorem{theorem}{Theorem}
\newtheorem{definition}{Definition}

\definecolor{jsoningray}{gray}{0.95}   % 背景浅灰色
\definecolor{jsonkey}{RGB}{128,0,128}  % Key颜色: 紫色
\definecolor{jsonval}{gray}{0.0}       % Value颜色: 黑色 (默认)
\definecolor{jsonred}{RGB}{200,0,0}    % 后缀颜色: 红色 (用于 @@包裹内容)
\definecolor{jsonnum}{RGB}{0,0,255}    % 数字颜色: 蓝色

\lstdefinelanguage{json}{
    basicstyle=\footnotesize\ttfamily\color{jsonval}, % 字体符合IEEE(小号等宽)，默认颜色设为黑色
    columns=flexible,       % 强制固定列宽，解决“看起来乱、间隙大”的问题
    upquote=true,        % 强制使用“直引号”，解决引号像中文/弯曲的问题
    backgroundcolor=\color{jsoningray}, % 背景浅灰
    frame=single,                       % 四周有边框
    framerule=0.5pt,                    % 边框粗细
    rulecolor=\color{black!50},         % 边框颜色(深灰)
    framesep=3pt,                       % 内容与边框的间距
    numbers=left,
    numbersep=5pt,
    numberstyle=\tiny\color{gray},
    stringstyle=\color{jsonval},        % 普通字符串(Value)显示为黑色
    keywordstyle=\color{jsonkey}\bfseries, % 关键字(Key)显示为紫色加粗
    alsoletter={"-},                    % 让双引号成为单词一部分，以便匹配 "key"
    moredelim=**[is][\color{jsonred}]{@@}{@@},
    literate=
     {0}{{{\color{jsonnum}0}}}{1}
     {1}{{{\color{jsonnum}1}}}{1}
     {2}{{{\color{jsonnum}2}}}{1}
     {3}{{{\color{jsonnum}3}}}{1}
     {4}{{{\color{jsonnum}4}}}{1}
     {5}{{{\color{jsonnum}5}}}{1}
     {6}{{{\color{jsonnum}6}}}{1}
     {7}{{{\color{jsonnum}7}}}{1}
     {8}{{{\color{jsonnum}8}}}{1}
     {9}{{{\color{jsonnum}9}}}{1},
    breaklines=true,
    captionpos=b,
    tabsize=2,
    showstringspaces=false
}

\begin{document}

\title{AgentDID: Trustless Identity Authentication for AI Agents \\
% \thanks{Identify applicable funding agency here. If none, delete this.}
% 
}
\author{
\IEEEauthorblockN{Minghui Xu}
\IEEEauthorblockA{
\textit{Quancheng Laboratory \& Shandong University}\\
mhxu@sdu.edu.cn}
\and
\IEEEauthorblockN{Xiaoyu Liu}
\IEEEauthorblockA{ 
\textit{Shandong University}\\
lxy\_dawn@mail.sdu.edu.cn}
\and
\IEEEauthorblockN{Yihao Guo\thanks{Corresponding author: Yihao Guo}}
\IEEEauthorblockA{ 
\textit{The Hong Kong Polytechnic University}\\
yihao.guo@polyu.edu.hk}
\linebreakand
\IEEEauthorblockN{Chunchi Liu}
\IEEEauthorblockA{\textit{Huawei Technologies Co., Ltd.}\\
liuchunchi@gwu.edu}
\and
\IEEEauthorblockN{Yue Zhang}
\IEEEauthorblockA{
\textit{Quancheng Laboratory \& Shandong University}\\
zyueinfosec@sdu.edu.cn}
\and
\IEEEauthorblockN{Xiuzhen Cheng}
\IEEEauthorblockA{
\textit{Shandong University}\\
xzcheng@sdu.edu.cn}
}

\maketitle

\begin{abstract}
AI agents are autonomous entities that can be instantiated on demand, migrate across platforms, and interact with other agents or services without continuous human supervision. In such environments, identity is critical for establishing reliable interaction semantics among agents that may lack prior trust relationships. However, existing identity and access management mechanisms are designed for human users or static machines, assuming centralized enrollment, persistent identifiers, and stable execution contexts. These assumptions do not hold for AI agents, whose identities are self-managed, short-lived, and tightly coupled with their execution state and capabilities. We study the problem of identity authentication and state verification for AI agents and identify three challenges: (1) supporting self-managed identities for autonomously created agents, (2) enabling authentication under large-scale, concurrent interactions, and (3) verifying agents' dynamic execution state, such as whether their context and capabilities remain valid at interaction time. To address these challenges, we present AgentDID, a decentralized framework for identity authentication and state verification. AgentDID leverages decentralized identifiers (DIDs) and verifiable credentials (VCs), enabling agents to manage their own identities and authenticate across systems without centralized control. To address the limitations of static credential-based approaches, AgentDID introduces a challenge-response mechanism that allows verifiers to validate an agent's execution conditions at interaction time. We implement AgentDID in compliance with W3C standards and evaluate it through throughput experiments with multiple concurrent agents. Results show that the system achieves scalable identity authentication and state verification, demonstrating its potential to support large populations of AI agents.

%To solve the above challenges, we present AgentDID, a trustless identity authentication for AI agents. AgentDID adopts decentralized identifiers and verifiable credentials to enable agents to manage identities and authenticate across systems without centralized control. To address the static nature of credential-based approaches, AgentDID introduces a challenge–-response mechanism that allows verifiers to check agents’ execution conditions, such as execution context and capability availability. 

%We implement AgentDID following W3C standards and conduct throughput experiments. The results show that the system supports concurrent identity authentication and state verification for large numbers of AI agents. 
\end{abstract}

\begin{IEEEkeywords}
AI Agent, Identity Authentication, Verifiable Credential.
\end{IEEEkeywords}

\section{Introduction}
\label{sec:introduction}
With the broader availability of large-scale models, AI agents are becoming key components in intelligent applications~\cite{yao2024survey,deng2025ai,yan2025protecting}. In 2024, several organizations, including OpenAI, Google, and Anthropic, released agent systems capable of autonomous task execution. As a result, the use of AI agents has expanded significantly in both scope and application. According to recent data released by CrewAI, the platform has executed more than 60 billion AI agent runs in total, with over 100 thousand multi-agent execution groups processed each day~\cite{CrewAI}. These observations illustrate the large-scale and high-concurrency characteristics of emerging agent ecosystems. The number of agents is expected to continue increasing. According to TRAY.AI, by 2027, 86\% of enterprises are expected to deploy AI agents, and 42\% plan to develop more than 100 agent prototypes~\cite{trayai2024agents}.  
One key factor behind this growth is the shift away from exclusive control by large technology companies. Open-source models such as the Llama series, together with local inference frameworks like Ollama, have reduced development complexity. This change has enabled individual users to build and operate their own agents, leading to a more decentralized ecosystem. 

However, the rapid expansion of the AI agent ecosystem has revealed critical security gaps~\cite{deng2025ai}. Current agent interaction frameworks generally lack effective identity authentication mechanisms. As a result, risks such as identity spoofing, unauthorized access, and privilege misuse are becoming increasingly prominent. Traditional identity and access management (IAM) systems, including OAuth 2.0, SAML, and FIDO2, were originally designed for scenarios involving users, devices, and platforms~\cite{yang2016model,arias2019survey}. These systems focus primarily on human users or static machines and are based on assumptions such as long-term sessions and centralized control of permissions. In contrast, AI agents operate independently without direct human intervention, which makes it difficult to maintain continuous control through conventional methods. 
More fundamentally, the nature of AI agents challenges the core assumptions of existing IAM models. These agents are capable of performing tasks autonomously across environments, interacting at very large scales, and adapting to changing operational conditions in real time. %These characteristics collectively give rise to new requirements for identity authentication and dynamic state verification~\cite{xi2025rise,ferrag2025llm}, introducing the following additional challenges (\textbf{C1-C3}). 
% These characteristics collectively give rise to new requirements for identity authentication and dynamic state verification~\cite{xi2025rise,ferrag2025llm}, 
% \textit{a key challenge of which is the ability to manage and verify identities for large populations of AI agents operating in decentralized environments}.
% These characteristics introduce new requirements for identity authentication and dynamic state verification~\cite{xi2025rise,ferrag2025llm}, leading to the question of how the identities of millions of AI agents can be managed and verified in a decentralized environment. 
These characteristics lead us to consider the following question: 

\textit{How can the identities of millions of AI agents be managed and verified in a decentralized environment?}

Addressing this question introduces new requirements for identity authentication and dynamic state verification~\cite{xi2025rise,ferrag2025llm}. We examine this problem along the following three dimensions (\textbf{C1--C3}). 

\textbf{(C1) Self-Managed Identity.} AI agents can be autonomously generated and operate independently, challenging centralized identity management systems that rely on pre-registered identities and fixed identifier assignment. Classical studies in multi-agent systems have modeled agents as autonomous entities with unique identifiers \cite{wooldridge2009introduction}. AI agents are capable of self-generation and independent operation. As they move between entities and systems, their identities must remain usable and verifiable. However, as agents increasingly migrate across heterogeneous platforms and engage in complex collaborative tasks, recent research has noted that their identities are often bound to specific platforms or environments~\cite{ferrag2025llm}. This limits the ability to recognize and verify agents across systems, leading to risks such as identity confusion. 

IAM systems rely on centralized registration and fixed identifier assignment. These approaches are not well-suited to support autonomous identity generation and verifiable attribute claims. Furthermore, they cannot reliably associate the same agent with a consistent identity as it interacts across organizational or platform boundaries. Therefore, a primary challenge in decentralized agent ecosystems is to construct an identity model that does not rely on a single provider, supports cross-system verification, and enables agents to present verifiable claims about their attributes. 

\textbf{(C2) Challenges of Large-Scale Authentication.} One of the notable characteristics of AI agents is their deployment at large scale with high levels of concurrency, which places increased load on existing centralized authentication systems. 

In such settings, agents communicate simultaneously with other agents, APIs, and external services. These interactions often span multiple domains and involve complex chains of requests. As the number of agents increases, authentication traffic grows proportionally, placing substantial pressure on centralized authentication infrastructures. Research on agentic AI has shown that centralized components for trust and authentication can become performance bottlenecks, reducing system availability and responsiveness under high concurrency. Centralized architectures face constraints in throughput, queuing, and consistency under load. Single-point processing models may not sustain reliable performance at scale.  
Furthermore, agent coordination at scale requires frequent identity verification. Relying on centralized authentication paths adds latency and increases the risk of failure at critical points in the system. In large-scale environments, enabling direct authentication between agents without depending on central intermediaries becomes an essential design requirement.  

\textbf{(C3) Dynamic State Verification.}
Current identity and access control systems lack mechanisms to verify the dynamic state of AI agents during execution. This limitation is critical because agent behavior depends on internal factors such as context, memory, intent, and current capabilities. These conditions change continuously and directly influence whether an agent should be allowed to perform a given action. Recent research confirms that agent decisions are shaped by evolving internal state over the course of multi-step tasks \cite{park2023generative}. However, most existing models rely on static roles or fixed identity attributes and do not account for changes in operational state. Studies have shown that without state verification, systems face risks such as privilege misuse, unauthorized operations, and insufficient auditability \cite{yi2024jailbreak}. Verifying the operational state of agents during authorization remains a core challenge in dynamic, multi-agent environments. 

In multi-agent interaction settings, identity authentication alone is insufficient to ensure secure interactions, as it only validates a peer’s long-term identifier and does not constrain the agent’s execution state during interaction. In practice, an adversarial agent may pass identity authentication and then mislead other agents by misrepresenting its capabilities, thereby influencing task assignment or interaction decisions.  This risk has been documented in real-world systems and security analyses. For example, Trustwave SpiderLabs’ analysis~\cite{Abusing} of the Agent-to-Agent (A2A) protocol shows that capability advertisement mechanisms based on static Agent Cards can be abused, allowing a malicious agent to dominate task allocation without violating identity authentication. These findings illustrate that identity authentication alone does not capture an agent’s execution state.  

Consequently, agent systems operating in open environments require both identity authentication and state verification, in order to ensure that an agent’s execution state remains consistent with its declared identity and capabilities.  

Decentralized Identifiers (DIDs) and Verifiable Credentials (VCs) are identity primitives supported by blockchains~\cite{xu2024exploring}. They have been applied in Web3~\cite{guo2025xrwa} and cross-chain environments~\cite{guo2023cross,guo2024zkCross} to support identity management and authentication across systems. DIDs allow agents to create and control identifiers without depending on a single provider. VCs provide a structured format for issuing and verifying claims about an agent’s permissions or capabilities. These features enable identity management that is both decentralized and verifiable. 
Based on these properties, we adopt DIDs and VCs to support the authentication of AI agents. Each agent is assigned a DID, which serves as a stable reference across systems and domains. Its permissions and execution context are expressed as VCs. This allows agents to authenticate directly, without relying on centralized services. However, the claims defined in VCs are typically static and do not reflect the agent’s state during execution. Relying solely on VCs is therefore insufficient for dynamic state verification~\cite{south2025identity,zou2025blocka2a}. To support dynamic state verification as required in C3, we propose a challenge–response mechanism. This mechanism requires agents to present evidence of current execution conditions, such as context or workload, at the time of interaction. The system can then make access decisions based on these current conditions rather than fixed attributes. This approach supports decentralized authentication while adapting to the dynamic nature of agent execution. 

For the sake of convenience, we highlight our contributions as follows: 
\begin{enumerate}

\item We propose AgentDID, a decentralized framework for AI agent authentication and state verification. To the best of our knowledge, this is the first framework that addresses dynamic state verification for AI agents in a decentralized setting. 

\item We follow the W3C standard and define custom credential types to represent agent-specific attributes at a fine-grained level. These credentials enable agents to authenticate across systems and environments (addressing challenges C1 and C2). To verify the correctness of an agent’s dynamic state, we design a challenge–response mechanism that checks dynamic execution conditions such as context and workload. This mechanism directly supports the requirements of C3. 

\item To validate the performance of our scheme, we conduct experiments on Sepolia, an Ethereum test network that simulates real-world deployment environments. We evaluate AgentDID from multiple perspectives, such as credential size, issuance time, and on-chain gas cost. Results show that latency remains within the seconds range and gas consumption stays below one US dollar, confirming the practicality of the approach. 

\end{enumerate}

The rest of the paper is organized as follows.
We review related work in Section~\ref{sec:related_work}.
Section~\ref{sec:preliminaries} introduces the system model and preliminaries.
AgentDID is presented in Section~\ref{sec:AgentDID}.
Section~\ref{sec:security-analysis} analyzes the security of the proposed scheme.
Implementation and evaluation are discussed in Section~\ref{sec:evaluation}.
Section~\ref{sec:conclusion} concludes the paper.

\section{Related Work} \label{sec:related_work}
In this section, we review related work on identity management and authentication, including traditional IAM systems and decentralized authentication approaches. 

\noindent \textbf{Traditional IAM.} 
Traditional identity and access management (IAM) systems address authentication for users and services in enterprise and web environments. 
These systems are typically based on centralized or federated trust models, in which identity providers authenticate subjects and issue credentials or assertions consumed by relying parties. 

Early IAM mechanisms follow centralized designs. 
Kerberos provides ticket-based authentication using a trusted key distribution center~\cite{kohl1993kerberos}, while directory services such as LDAP and Active Directory maintain identity records and credentials in centralized repositories~\cite{zeilenga2006lightweight}. 
For cross-domain authentication, federated identity standards such as SAML~2.0 define protocols through which identity providers issue signed assertions that can be verified by multiple service providers~\cite{saml2}. 
In web and API-oriented settings, OAuth~2.0 specifies an authorization framework based on delegated access, and OpenID Connect (OIDC) extends OAuth~2.0 by defining an identity layer that supports user authentication via ID tokens~\cite{hardt2012oauth,sakimura2014openid}. 
These protocols are implemented in both commercial and open-source IAM systems for managing access to applications and services. 
In addition, FIDO2 and WebAuthn define public-key–based authentication mechanisms that can be integrated into IAM systems to support passwordless authentication~\cite{webauthn}. 

Traditional IAM systems primarily verify static identities and enforce access control policies based on authenticated subjects. 
They generally assume that identity subjects are users or long-lived services and do not explicitly model autonomous agent-to-agent interactions or verification of runtime execution state at interaction time.

\noindent \textbf{Decentralized Authentication.} 
Decentralized authentication is typically grounded in the concept of self-sovereign identity (SSI), whose core objective is to enable cross-domain identification and authentication without relying on centralized identity providers. Under this paradigm, identity subjects generate and control their own identifiers, and authentication is achieved through cryptographic mechanisms rather than centralized account systems.  The foundation of decentralized identity systems consists of DIDs and VCs~\cite{mazzocca2025survey}. The DID specification proposed by the W3C defines a mechanism for managing identifiers without requiring a central registration authority. 
Building on DIDs, the VC specification further defines issuer-signed structured statements that attest to identity attributes or qualifications of a subject, and supports the verifiable presentation (VP) and validation of such claims across different application contexts. 

Certain works have explored the initial integration of decentralized authentication mechanisms with AI agents.  
Hu~{\em et al.}~\cite{hu2025trustless} investigate the motivation and potential benefits of combining decentralized technologies with AI agents.  
Huang~{\em et al.}~\cite{huang2025novel} study the use of DIDs and VCs to enable fine-grained identity descriptions for AI agents.   
Garzon~{\em et al.}~\cite{garzon2025ai} experimentally evaluate a DID-based approach in a multi-agent environment.  
BlockA2A~\cite{zou2025blocka2a} is an agent authentication framework built on decentralized identity, which introduces a Defense Orchestration Engine to enhance the security of agent interactions. 

Existing work on decentralized identity for AI agents primarily uses self-managed identities in A2A authentication, where DIDs and VCs support identity generation and verification. 
Many proposals~\cite{hu2025trustless,huang2025novel,garzon2025ai,zou2025blocka2a} focus on protocol design or system architecture, while empirical evaluation under concurrent multi-agent interaction workloads is not reported, leaving their applicability to agent concurrency settings unspecified. 
In addition, these approaches~\cite{hu2025trustless,huang2025novel,garzon2025ai,zou2025blocka2a} often treat agent identity as a static property and do not explicitly incorporate dynamic runtime states during interactions, such as workload, context consistency, or capability availability. 
Motivated by this scope, we present AgentDID, which studies decentralized authentication together with state-aware verification; a comparison is summarized in Table~\ref{tab:compare}.  
\begin{table}[t]
\centering
\caption{Coverage of requirements for AI agent interactions.}
\label{tab:compare}
\begin{tabular*}{\columnwidth}{@{\extracolsep{\fill}}lccc}
\toprule
\textbf{Entity / Approach} 
& \textbf{SSI} 
& \textbf{CS} 
& \textbf{State} \\
\midrule
Agent Requirements 
& Required 
& Required 
& Required \\
\midrule
Traditional IAM 
& No 
& Limited
& No \\
\midrule
Hu~{\em et al.}~\cite{hu2025trustless} 
& Yes 
& N/A 
& No \\
Huang~{\em et al.}~\cite{huang2025novel} 
& Yes 
& Theoretical  
& No \\
Garzon~{\em et al.}~\cite{garzon2025ai} 
& Yes 
& Theoretical  
& No \\
Zou~{\em et al.}~\cite{zou2025blocka2a} 
& Yes 
& Theoretical  
& No \\
\midrule
AgentDID (This work) 
& Yes 
& \makecell{Experimental \\ \& Theoretical}
& Yes \\
\bottomrule
\end{tabular*}
\begin{tablenotes}
\footnotesize
\item * SSI denotes self-sovereign identity and CS denotes concurrent scalability. 
\end{tablenotes}
\end{table}

\section{Models and Preliminaries}  \label{sec:preliminaries}
In this section, we first introduce our model and then present the preliminaries that serve as the basic building blocks of our schemes. 

\subsection{Models}
\textbf{Participants.} This protocol involves four entities: the \textit{Controller}, the \textit{Issuer}, the \textit{Holder}, and the \textit{Verifier}. The \textit{Controller} initializes the identity of an AI agent by generating its cryptographic keys and registering a DID as the agent’s identity anchor. The \textit{Issuer} verifies identity-related claims associated with an AI agent and issues cryptographically endorsed VCs. The \textit{Holder} is the AI agent whose identity is to be authenticated.  It participates in the protocol by responding to authentication challenges and presenting its credentials. The \textit{Verifier} is the AI agent that initiates the authentication process. 
Before interaction, it evaluates the Holder’s identity and current operational state by issuing challenges and validating the responses and presented credentials. 

\textbf{Adversary.} We consider an adversary whose goal is to disrupt agent-to-agent (A2A) interactions by providing false information prior to interaction.  
The adversary may target AI agents involved in the system, including attempting to impersonate an agent or manipulating an agent to misreport its properties. 

First, the adversary may attempt \emph{identity forgery}. 
In this case, the adversary aims to create or present a fake AI agent that claims a legitimate identity, or to impersonate an existing \textit{Holder} when interacting with an honest \textit{Verifier}. 
Such attempts may involve fabricating identifiers or misusing valid credentials out of their intended context. 
Second, the adversary may attempt \emph{state forgery}.
Here, the adversary controls or compromises an AI agent and causes it to provide false or outdated runtime information, such as execution context, current workload, or claimed capabilities.
These falsified states are provided prior to A2A interaction and may mislead a \textit{Verifier} into engaging in unsafe or unsuitable interactions.

We assume that the \textit{Controller} and the \textit{Issuer} correctly perform identity initialization and credential issuance.
However, credentials issued by an \textit{Issuer} may be replayed or misused by an adversary, and runtime state information provided by AI agents cannot be assumed truthful without verification. 

\textbf{Security Goals.} Our goal under this model is to prevent identity forgery and state forgery, and to achieve correct A2A authentication and state verification. 
\begin{definition}[Security of AgentDID]  
Let $\mathcal{A}$ be any probabilistic polynomial-time (PPT) adversary that may corrupt an arbitrary subset of agents and the network. 
We say that AgentDID is \emph{secure} if, for any honest Verifier $V^*$, the probability that $\mathcal{A}$ causes $V^*$ to accept either an \emph{identity forgery} or a \emph{state forgery} is negligible in the security parameter $\lambda$.  
% \begin{itemize}
%   \item \textbf{Identity forgery:} $V^*$ accepts a purported Holder as some identity (DID) for which the adversary does not control the corresponding operational secret key and does not possess valid VCs bound to that DID.  
%   \item \textbf{State forgery:} $V^*$ accepts that a Holder satisfies a required dynamic state predicate, while the Holder is not actually in a state that satisfies this predicate at verification time.  
% \end{itemize}
\end{definition}

 In the rest of this section, we introduce the three key technologies adopted to achieve these design goals. 

\subsection{Decentralized Identifiers} 
DIDs~\cite{mazzocca2025survey} are W3C-standardized identifiers for digital identities. They allow entities to create and manage their identities autonomously, without relying on centralized identity providers.  

\begin{definition}[DID Operations]
The operations of a DID are formalized by a tuple of polynomial-time algorithms 
$\Pi_{\mathsf{DID}} = (\mathsf{KeyGen}, \mathsf{Create}, \mathsf{Resolve}, \mathsf{Update})$, defined as follows:
\begin{itemize}
    \item $(pk, sk) \leftarrow \mathsf{KeyGen}(1^{\lambda})$: Given a security parameter $\lambda$, the algorithm generates an asymmetric key pair consisting of a public key $pk$ and a private key $sk$. 

    \item $(\mathsf{did}, \mathsf{doc}) \leftarrow \mathsf{Create}(pk, sk)$:  
    Given a key pair $(pk, sk)$, the algorithm creates a globally unique DID string $\mathsf{did}$ and constructs an associated DID Document $\mathsf{doc}$ that binds the DID to the public key and other verification-related metadata.  

    \item $\mathsf{doc} \leftarrow \mathsf{Resolve}(\mathsf{did})$:  
    Given a DID string $\mathsf{did}$, the algorithm returns the latest corresponding DID Document. 

    \item $\mathsf{\{true, false\}} \leftarrow \mathsf{Update}(\mathsf{did}, \delta, sk)$:
    Given a DID $\mathsf{did}$, a state delta $\delta$, and a private key $sk$. The algorithm applies $\delta$ to update the current DID Document only if $sk$ corresponds to an authorized public key listed in the document. Returns $\mathsf{true}$ on success, $\mathsf{false}$ otherwise.
\end{itemize}
\end{definition} 

In our framework, the Controller adopts $\Pi_{\mathsf{DID}}$ to register the agent’s DID on-chain as its digital identity. This establishes a verifiable identity anchor that can be publicly resolved by any verifier, enabling decentralized and cross-domain authentication. 

\subsection{Verifiable Credentials} 
VCs are cryptographically signed statements issued by an issuer, attesting to specific claims about a holder. 

\begin{definition}[VC Operations]
The lifecycle of a VC is formalized by a tuple of polynomial-time algorithms $\Pi_{\mathsf{VC}} = (\mathsf{Request}, \mathsf{Issue}, \mathsf{Present}, \mathsf{Verify})$, defined as follows: 
\begin{itemize}
    \item $\mathsf{req} \leftarrow \mathsf{Request}(\mathsf{claims}, sk_H)$:  
    Given a set of claims $\mathsf{claims}$, the Holder uses its private key $sk_H$ to generate a signed credential request $\mathsf{req}$.

    \item $\mathsf{cred} \leftarrow \mathsf{Issue}(\mathsf{req}, sk_I)$:  
    The Issuer $I$ verifies the request and signs the attested claims using its private key $sk_I$, producing a VC $\mathsf{cred}$.

    \item $\mathsf{vp} \leftarrow \mathsf{Present}(\mathsf{cred}, \mathsf{nonce}, sk_H)$:  
    The Holder generates a VP $\mathsf{vp}$ by signing the credential $\mathsf{cred}$ together with a fresh challenge $\mathsf{nonce}$ issued by the Verifier to ensure that the presentation corresponds to the current verification request. 

    \item $\{\mathsf{true}, \mathsf{false}\} \leftarrow \mathsf{Verify}(\mathsf{cred}, pk_I)$:  
    Given the Issuer’s public key $pk_I$ and a credential $\mathsf{cred}$, the algorithm verifies the Issuer's digital signature and the integrity of the claims, outputting a boolean result.
\end{itemize}
\end{definition}

The set of claims carried by valid credentials constitutes verifiable evidence of a holder’s attributes and capabilities.  

\subsection{Publicly Detectable Watermark}

Publicly detectable watermarking (PDW) for Large Language Models (LLMs) aims to enable reliable attribution of generated text, while allowing watermark detection to be performed using public information. 

\begin{definition}[PDW Scheme]
PDW is formalized by a tuple of polynomial-time algorithms 
$\Pi_{\mathsf{PDW}} = (\mathsf{Setup}, \mathsf{Watermark}, \mathsf{Detect})$, defined as follows: 
\begin{itemize}
    \item $(k_{\mathsf{priv}}, k_{\mathsf{pub}}) \leftarrow \mathsf{Setup}(1^\lambda)$:  
    On input a security parameter $\lambda$, the setup algorithm outputs a private watermarking parameter $k_{\mathsf{priv}}$ and a corresponding public detection parameter $k_{\mathsf{pub}}$. 

    \item $t \leftarrow \mathsf{Watermark}(k_{\mathsf{priv}}, \rho)$:  
    Given the private watermarking parameter $k_{\mathsf{priv}}$ and a prompt $\rho$, the algorithm produces a text $t$ generated by the underlying language model with an embedded watermark. 

    \item $\{\mathsf{true}, \mathsf{false}\} \leftarrow \mathsf{Detect}(k_{\mathsf{pub}}, t^*)$:  
    Given the public detection parameter $k_{\mathsf{pub}}$ and a candidate text $t^*$, the detection algorithm determines whether the text contains a valid watermark. 
\end{itemize}
\end{definition}
The embedding procedure is kept private, while detection relies solely on public information, and recovering the embedding mechanism from the public detector is computationally infeasible. 

\section{AgentDID} \label{sec:AgentDID}
This section begins with an overview of the proposed framework, followed by a detailed description of each component. 

\subsection{Overview} 

\begin{figure}[h]
  \centering
  \includegraphics[width=\linewidth]{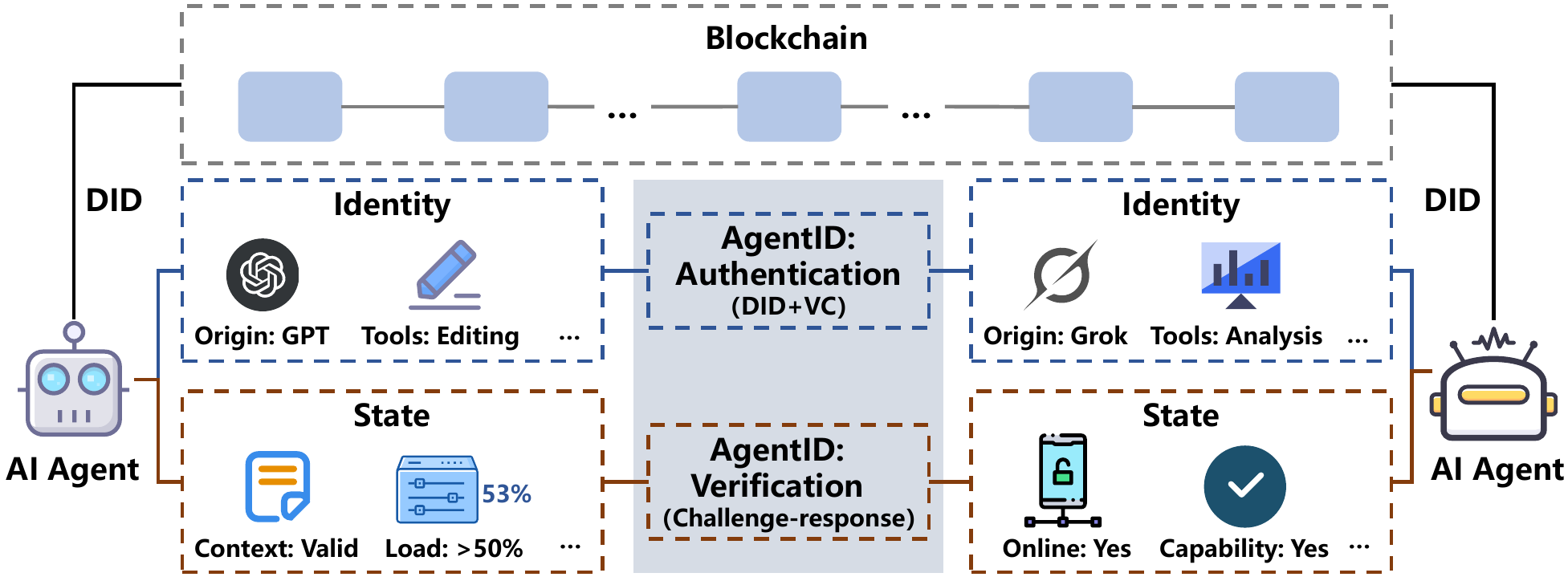}
  \caption{The overview of the framework.  In our design, an AI agent is described by a combination of static identity attributes and dynamic runtime states. Building on this abstraction, AgentDID enables identity authentication and state verification for A2A interactions. }
  \label{fig:overview}
\end{figure}

The framework represents an AI agent using two components: a static identity and a set of dynamic states. The static identity refers to relatively stable attributes of the agent that persist over extended periods, such as the underlying model and the tools it can access. The dynamic state captures runtime information that may change during execution, including context, workload, online status, and current capability availability. 

Each AI agent is associated with an on-chain DID, which serves as a stable identity reference. Based on this identifier, Issuers issue VCs that describe the agent’s long-term attributes, such as its underlying model and available tools. These credentials form the basis of the agent’s static identity and can be selectively disclosed in the form of VPs during authentication in different applications.   In addition to static attributes, the framework includes dynamic states that are verified at the time of interaction. When two agents interact, the requester acts as a verifier and initiates a challenge that targets specific runtime conditions, such as workload or current capability availability. The responding agent, as the Holder, must generate a proof to demonstrate that its current state satisfies the requested conditions. This process enables runtime state verification without relying on a central authority. 

Using this representation, AgentDID supports both identity authentication and dynamic state verification in A2A interactions. In the following subsections, we describe the details of identity generation, authentication, and dynamic state verification.    

\subsection{Identity Generation} 
The identity of an AI agent consists of a globally unique DID and a structured set of verifiable attributes. These attributes define the essential characteristics of the agent and are organized into three dimensions: provenance, capabilities, and compliance. The provenance dimension includes information such as the agent’s origin, the base model used, and training data sources. The capabilities dimension reflects the tools the agent can access and its supported functions. The compliance dimension refers to its alignment with regulatory requirements and audit records. To represent and verify these attributes in a decentralized and interoperable way, we adopt VCs. These credentials serve as cryptographic attestations issued by trusted issuers, allowing third parties to verify the agent’s identity and capabilities without requiring direct trust relationships. 

The complete identity generation process begins with the Controller registering a DID for the AI agent. The agent then submits signed claims for selected attributes to Issuers in order to request VCs. After evaluating the submitted claims, each Issuer issues the corresponding credential. Together, these credentials form a verifiable identity for the agent. 

\noindent \textbf{Register DID.} AgentDID adopts the W3C standard to create a DID for each AI agent.  
The Controller generates two independent asymmetric key pairs. The first is an administrative key pair, denoted as $(pk_\mathsf{admin}, sk_\mathsf{admin})$, which is used to perform identity management operations, such as key recovery and service endpoint updates. The second is an operational key pair, denoted as $(pk_\mathsf{op}, sk_\mathsf{op})$, which enables the AI agent to request VCs and sign VPs. Based on the administrative key, the Controller creates a DID and its initial DID Document using the procedure defined in $\Pi.\mathsf{Create}$. Subsequently, the Controller incorporates the operational key $pk_\mathsf{op}$ into the DID Document via $\Pi.\mathsf{Update}$. Consequently, the final DID Document includes both public keys. An example of the DID Document is shown in Fig.~\ref{lst:did-agent-doc}. Specifically, the \texttt{verificationMethod} field lists the cryptographic material for these keys using the Multibase format. To enforce privilege separation, $pk_{\mathsf{admin}}$ is referenced in the \texttt{capabilityInvocation} property, granting exclusive authority for document updates. Conversely, $pk_{\mathsf{op}}$ is associated with \texttt{authentication} and \texttt{assertionMethod}, empowering the agent to perform daily signing tasks without exposing the administrative root of trust. Additionally, the \texttt{service} field exposes a standardized communication endpoint for external interaction.  

\begin{figure}[!t]
    \centering
    \begin{minipage}{0.95\columnwidth}
    \begin{lstlisting}[linewidth=\columnwidth, language=json, firstnumber=1,
    keywords={"@context", "id", "type", "verificationMethod", "controller", "publicKeyMultibase", "capabilityInvocation", "authentication", "assertionMethod", "service", "type", "serviceEndpoint"}]
{
  "@context": [
    "https://www.w3.org/ns/did/v1",
    "https://w3id.org/security/suites/ed25519-2020/v1"
  ],
  "id": "did:agent:123456789abcdefghi",
  "verificationMethod": [
    {
      "id": "did:agent:123456789abcdefghi@@#admin-key@@",
      "type": "Ed25519VerificationKey2020",
      "controller": "did:agent:123456789abcdefghi",
      "publicKeyMultibase": "z6MkqREd9...AdminKeyRawData..."
    },
    {
      "id": "did:agent:123456789abcdefghi@@#op-key-1@@",
      "type": "Ed25519VerificationKey2020",
      "controller": "did:agent:123456789abcdefghi",
      "publicKeyMultibase": "z6MkwW8j3...OpKeyRawData..."
    }
  ],
  "capabilityInvocation": [
    "did:agent:123456789abcdefghi@@#admin-key@@"
  ],
  "authentication": [
    "did:agent:123456789abcdefghi@@#op-key-1@@"
  ],
  "assertionMethod": [
    "did:agent:123456789abcdefghi@@#op-key-1@@"
  ],
  "service": [
    {
      "id": "did:agent:123456789abcdefghi#agent-comm",
      "type": "AgentMessaging",
      "serviceEndpoint": "https://agent.example.com/api"
    }
  ]
}
    \end{lstlisting}
    \end{minipage}
    \caption{A DID Document for the AI agent.}
    \label{lst:did-agent-doc}
\end{figure}

\noindent \textbf{Request VCs.}  
After registering a DID, the AI agent must obtain VCs that confirm its core identity attributes. These credentials provide a reliable basis for subsequent authentication. The agent begins by submitting a signed request to an issuer. Each request includes a set of claims and is signed using the agent’s operational private key. Upon receiving the request, the Issuer verifies the signature and starts verifying each claim individually. 

The first step is to confirm the provenance of the agent. The Issuer must verify that the agent was created and managed by a valid Controller. This can be done by checking a trusted registry or by resolving a DID that is published on a website with verified domain ownership. The Issuer then requests a signed statement from the Controller to prove the relationship. This statement is verified using the Controller’s public key. 
Next, the Issuer verifies the core language model claimed by the AI agent using the previously defined PDW scheme. 
Specifically, the Issuer obtains the corresponding public detection parameter $k_{\mathsf{pub}}$ and challenges the agent with a fresh prompt $\rho$. 
The agent responds with a candidate text $t^{*}$, which is subsequently checked using $\Pi_\mathsf{PDW}.\mathsf{Detect}$.  
If the detection succeeds, the agent’s model claim is accepted. 

Then, the Issuer further evaluates whether the AI agent has access to the external tools it declares. 
This is performed by issuing test queries to each declared interface to confirm both tool availability and correct invocation behavior.
In addition, the Issuer assesses the agent’s functional capabilities by running standardized benchmark evaluations selected according to the agent’s declared domain.
These benchmarks may target different aspects of agent behavior, such as general planning ability or effective tool usage.  The evaluation outcomes are recorded in a structured format, including the benchmark name, version, numerical score, and a reference link to the complete test report, and are incorporated into the issued credential. 
As illustrated in Fig.~\ref{lst:vc-agent-static}, the capability assessment results are recorded in the \texttt{credentialSubject} field, including the applied benchmark (e.g., AgentBench), its version, the agent’s score, and a reference to the complete evaluation report.  

Finally, the Issuer verifies the agent’s claims about compliance. This task must be performed by issuers with the appropriate qualifications. For example, an agent that processes personal data must comply with relevant data protection rules. Once verified, the Issuer creates a structured claim and issues it as a credential. Each validated claim is included in a credential signed by the Issuer using the procedure defined in $\Pi_\mathsf{VC}.\mathsf{Issue}$. These credentials form the basis of a structured and verifiable agent identity.

\begin{figure}[htp]
    \centering
    \begin{minipage}{0.95\columnwidth}
    \begin{lstlisting}[linewidth=\columnwidth, language=json, firstnumber=1, keywords={"@context", "type", "name", "description", "credentialSubject", "id", "evaluation", "@type", "ratingSystem", "ratingVersion", "ratingValue", "bestRating", "dimensionScores", "alfworld_planning", "webshop_tool_use", "os_interaction", "lateral_thinking", "reportUrl", "datasetHash"}]
{
  "@context": [
    "https://www.w3.org/ns/credentials/v2",
    "https://schema.org"
  ],
  "type": ["VerifiableCredential", "AgentCapabilityCredential"],
  "name": "Agent Capability Assessment",
  "description": "Verified performance metrics evaluating agent planning and tool usage capabilities.",
  "credentialSubject": {
    "id": "did:agent:123456789abcdefghi",
    "evaluation": {
      "@type": "Rating",
      "ratingSystem": "AgentBench v0.2 (Comprehensive)",
      "ratingVersion": "v0.2.1",
      "ratingValue": "0.785",
      "bestRating": "1.000",
      "dimensionScores": {
        "alfworld_planning": 0.82,
        "webshop_tool_use": 0.75,
        "os_interaction": 0.68,
        "lateral_thinking": 0.89
      },
      "reportUrl": "https://example.eval.org/reports/agent-bench/uuid-550e8400-e29b",
      "datasetHash": "sha256:e3b0c44..."
    }
  }
}
    \end{lstlisting}
    \end{minipage}
    \caption{Example VC for static agent attributes.}
    \label{lst:vc-agent-static}
\end{figure}

\subsection{Identity Authentication and State Verification} \label{sec:agent_identity_authentication}
Building on the established digital identity of the AI agent, this section presents the interaction verification process. 
As shown in Fig.~\ref{fig:identity_authentication_state_verification}, the process consists of two phases: identity authentication and state verification. 

The following subsections describe these two phases in detail. 

\begin{figure}[htp]
  \centering
  \includegraphics[width=0.98\linewidth]{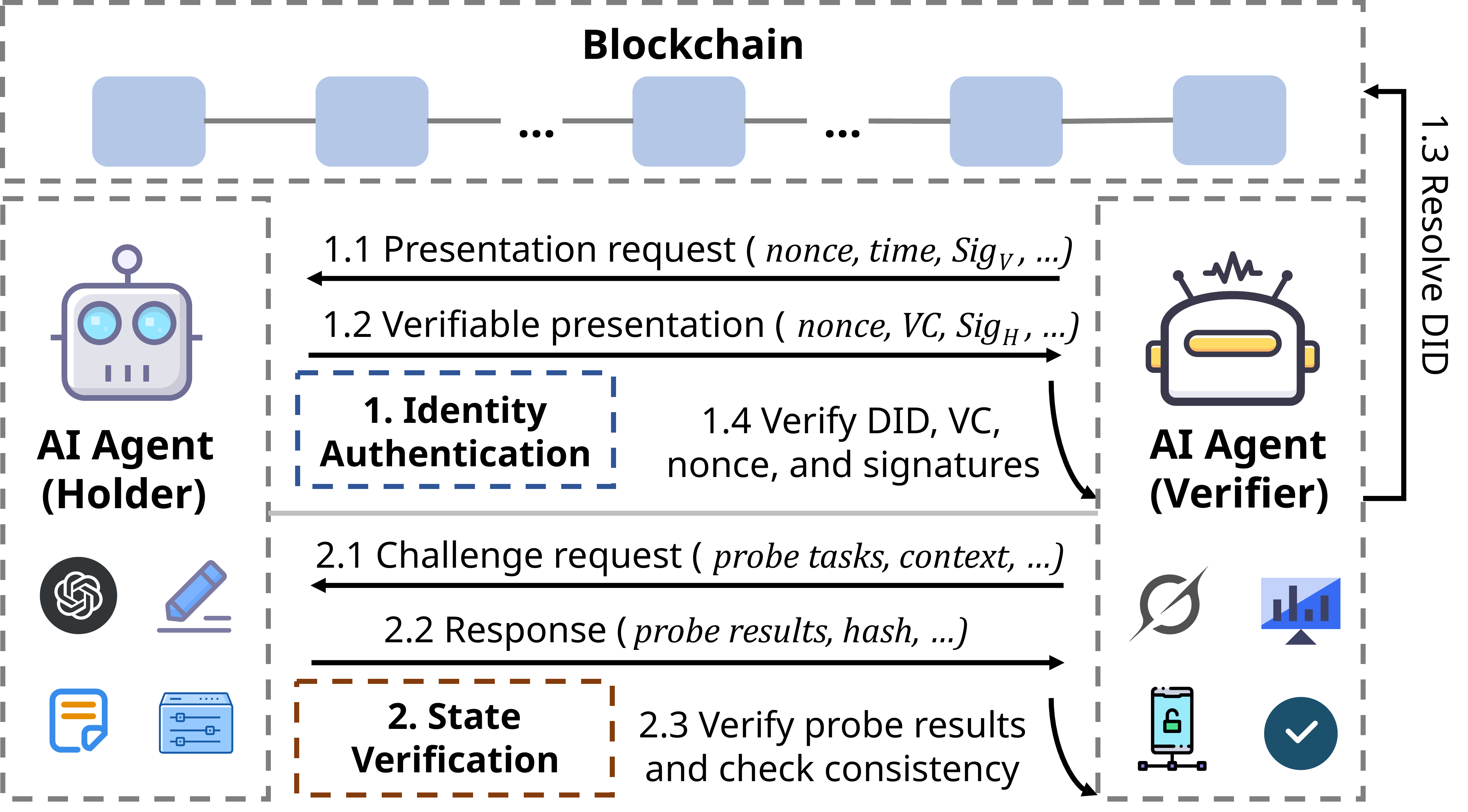}
  \caption{The workflow for identity authentication and state verification.}
  \label{fig:identity_authentication_state_verification}
\end{figure}

\noindent \textbf{Identity Authentication.} This process constitutes one agent verifying the static identity of another agent, and AgentDID adopts DIDs and VCs to achieve this purpose. 

Consider two AI agents, a Holder and a Verifier, engaged in an identity authentication protocol. The authentication process is initiated by the Verifier, who generates a one-time random value (nonce) and transmits it to the Holder. The nonce ensures the freshness of the Holder's subsequent response, thereby mitigating replay attacks. Upon receiving the challenge, the Holder constructs a VP. The VP incorporates the received nonce together with the VCs required for the interaction. In particular, the Holder invokes $\Pi_{\mathsf{VC}}.\mathsf{Present}$ to generate the VP and subsequently signs the resulting presentation using its operational private key. 

After receiving the VP, the Verifier performs a multi-step verification procedure. First, the Verifier resolves the Holder's DID to retrieve the operational public key contained within the corresponding DID Document. This public key is then used to verify the VP's digital signature. A successful signature verification confirms that the VP was generated by the Holder. The Verifier also checks that the nonce embedded in the VP matches the previously issued challenge, ensuring that the VP constitutes a timely response rather than a replayed message.  Next, the Verifier assesses the validity of the enclosed VCs using $\Pi_{\mathsf{VC}}.\mathsf{Verify}$.
This evaluation involves confirming that the DIDs of all credential issuers appear in the Verifier's trusted issuer list and verifying the integrity of each VC through digital signature validation. The Verifier must further ensure that the subject of each VC corresponds to the Holder, preventing the misuse of stolen  credentials. Finally, the Verifier checks the validity period of the VCs to ensure that none have expired. 

\noindent \textbf{State Verification.} This process corresponds to one agent verifying the dynamic state of another agent, and AgentDID designs state-specific challenge tasks to validate the correctness of the peer’s current state. 

VCs establish the identity of an AI agent, but they are essentially static: their contents remain unchanged over long periods and do not capture the agent's real-time operational state during interactions. In contrast, the state of an AI agent evolves continuously, and factors such as its current readiness, operational status, and task alignment may affect its ability to execute a given task. Consequently, static VCs alone cannot verify these dynamic properties.
To address this limitation, we introduce a state verification mechanism that evaluates the AI agent's dynamic state. State verification consists of two components: the \emph{Readiness Probe}, which determines whether the agent is in a state suitable for executing the task, and the \emph{Context Consistency Check}, which verifies that the holder maintains a synchronized interaction context with the verifier, ensuring that no context has been lost due to internal faults or resets. These two components provide dynamic verification that complements the static nature of VCs.  

\noindent \textit{\underline{Readiness Probe.}}
This process evaluates whether the agent can complete a probe task. A successful response provides the Verifier with several forms of behavioral evidence. First, successful message exchange indicates that the agent's service process is online and reachable. Second, correct interpretation and execution of the probe instructions show that the core inference engine is operational. Third, the probe task requires invoking one or more specified tools, and successful tool invocation confirms that the execution path from instruction parsing to tool usage is functioning as expected. 
For cost-sensitive tasks, the Verifier may additionally estimate token usage during the probe to infer the expected cost of subsequent interactions. 
Using a single probe, the Verifier can also measure response latency to estimate the agent's current load and decide whether to proceed with the interaction. If the Holder is unavailable or exhibits high latency, the Verifier may wait or select an alternative holder. 

The probe task is standardized at the protocol layer, defining a uniform interface and data structure that all parties follow. The task content itself, however, is derived from the specific interaction to be performed. The Verifier dynamically populates the probe with task-relevant parameters, such as the critical tools that need to be invoked. 
Fig.~\ref{lst:probe_task} provides an example of a comprehensive probe task template that combines text processing with multiple tool invocations. The \texttt{template\_id} field provides a unique identifier for this task structure. The \texttt{template\_str} defines a structured prompt with placeholders like \texttt{{{input\_text}}}, separating the fixed instruction logic from the dynamic input data, which is generated by the Verifier for each probe to prevent replay attacks. The \texttt{required\_tool\_names} field explicitly lists the tools that must be used, in this case for querying the current date and performing a SHA-256 hash calculation, thereby enabling cryptographic verification of the tool invocation path. 
% Finally, the \texttt{timeout\_ms} field is set to a descriptive placeholder, \texttt{Dynamically Calculated Latency}, indicating that the response deadline is not fixed but is computed by the verifier based on factors such as network conditions and task complexity, providing a realistic and context-aware SLA for the Holder agent.
Finally, the \texttt{timeout\_ms} field uses a descriptive placeholder, \texttt{Dynamically Calculated Latency}. 
This indicates that the response deadline is not fixed in advance, but instead computed by the Verifier at runtime based on factors such as network conditions and task complexity. 
%As a result, the verifier can enforce a more realistic and context-aware SLA for the Holder agent. 

\begin{figure}[htp]
    \centering
    \begin{minipage}{0.95\columnwidth}
    \begin{lstlisting}[linewidth=\columnwidth, language=json, firstnumber=1, keywords={"template_id", "description", "template_str", "required_tool_names", "timeout_ms"}]
{
    "template_id": "tpl_comprehensive_check",
    "description":"Comprehensive Check: Summarizes text, queries the current time, and hashes the original input.",
    "template_str": "Please perform three actions: 1. Summarize the text: '{{input_text}}'. 2. Get the current UTC date using '{{required_tools[0]}}'. 3. Calculate the SHA-256 hash of the original input text using '{{required_tools[1]}}'. Respond in a JSON object with keys 'summary', 'current_date', and 'text_hash'.",
    "required_tool_names": ["get_current_utc_date", "get_hash"],
    "timeout_ms": "Dynamically Calculated Latency"
  }
    \end{lstlisting}
    \end{minipage}
    \caption{Example probe task for availability verification.}
    \label{lst:probe_task}
\end{figure}

\noindent \textit{\underline{Context Consistency Check.}}
This process ensures that the AI agent has not lost context due to internal factors such as stateless instance rotation, cache resets, or internal logic faults during an interaction, or before beginning a new interaction that depends on previously established context. 

Context consistency is verified by checking the hash of the complete serialized context. When contextual alignment between both parties is required, such as in multi-turn or long-duration tasks, the Verifier issues a consistency check request to the Holder. Simultaneously,the Verifier computes $h_{\text{verifier}}$, the hash of its local context sequence prior to this request. Upon receiving the request, the Holder computes the hash of its context sequence excluding the received request to obtain $h_{\text{holder}}$, signs it, and sends it back to the Verifier. After receiving the response, the Verifier validates the signature using the Holder's public key to ensure the authenticity of the sender and the integrity of the transmitted data. Subsequently, the Verifier compares $h_{\text{verifier}}$ with $h_{\text{holder}}$ to determine whether the contexts held by both parties are consistent.  

In summary, this protocol employs cryptographic mechanisms to authenticate the identity of the AI agent. Based on this trust foundation, the protocol further verifies the agent's dynamic state, including its readiness and contextual alignment at the time of interaction. These mechanisms together support the construction of a secure multi-agent ecosystem. 

\section{Security Analysis} \label{sec:security-analysis}
In this section, we provide the security analysis of AgentDID. 

\begin{theorem}
Assume that (i) the signature schemes used in $\Pi_{\mathsf{DID}}$ and $\Pi_{\mathsf{VC}}$ are existentially unforgeable under chosen-message attacks (EUF-CMA), (ii) the VC system $\Pi_{\mathsf{VC}}$ is sound, (iii) the publicly detectable watermarking scheme $\Pi_{\mathsf{PDW}}$ is secure, and (iv) the hash function used in the context consistency check is collision-resistant. 
Then, for any PPT adversary $\mathcal{A}$ that may corrupt an arbitrary subset of agents and the network, and for any honest Verifier $V^*$, the probability that $\mathcal{A}$ causes $V^*$ to accept either an identity forgery or a state forgery is negligible in the security parameter $\lambda$.  
\end{theorem}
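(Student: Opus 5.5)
We argue by a sequence of hybrid reductions. The event ``$V^*$ accepts a forgery'' is split into disjoint bad events, one per check that $V^*$ performs. We then show that each bad event, if it occurred with non-negligible probability, would yield a PPT algorithm breaking one of assumptions (i)--(iv). A union bound over the polynomially many sessions and bad events then gives a negligible total. Concretely, let $\mathsf{Forge}_{\mathrm{id}}$ be the event that $V^*$ accepts a VP under some $\mathsf{did}$ whose operational key $sk_{\mathsf{op}}$ belongs to an uncorrupted agent, or accepts a credential not issued to that $\mathsf{did}$ by a trusted Issuer. Let $\mathsf{Forge}_{\mathrm{st}}$ be the event that $V^*$ accepts a readiness probe or context-consistency response whose claimed state is false. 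We bound $\Pr[\mathsf{Forge}_{\mathrm{id}}]+\Pr[\mathsf{Forge}_{\mathrm{st}}]$.

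For identity forgery, we first note that $\mathsf{Resolve}$ returns the document last modified by $\mathsf{Update}$. By EUF-CMA of the admin key, the adversary cannot replace $pk_{\mathsf{op}}$ of an honest agent except with negligible probability. The reduction embeds the challenge key as $pk_{\mathsf{admin}}$ and answers signing queries through the oracle. Given the correct $pk_{\mathsf{op}}$, an accepted VP carries a valid signature on a message that contains the fresh nonce chosen by $V^*$. We then do a case split on whether this message was ever signed by the honest holder. If it was, the nonces must coincide, which happens with probability at most $q^2/2^{\lambda}$ because nonces are sampled uniformly. If it was not, the VP is an EUF-CMA forgery under $pk_{\mathsf{op}}$, which we obtain by guessing the targeted agent with a $1/\mathrm{poly}$ loss. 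An accepted VC likewise either is a forgery under a trusted $pk_I$ or was genuinely issued. In the latter case, the subject-binding check together with soundness (ii) of $\Pi_{\mathsf{VC}}$ ensures that it attests the true claims of the subject $\mathsf{did}$. The model claims inside such a VC rest on soundness (iii) of $\Pi_{\mathsf{PDW}}$: passing $\mathsf{Detect}$ on a fresh prompt without $k_{\mathsf{priv}}$ would break watermark unforgeability.

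For state forgery, we treat the two components separately. For the context consistency check, $V^*$ accepts only if the response is a valid signature under the holder's $pk_{\mathsf{op}}$ on some $h_{\text{holder}}=h_{\text{verifier}}$. Signature unforgeability, argued as above, forces this value to come from the legitimate holder. If the holder's serialized context $c'$ differs from the verifier's context $c$ while $H(c')=H(c)$, then the pair $(c,c')$ is a collision for $H$, and assumption (iv) makes this negligible. For the readiness probe, freshness follows because $V^*$ samples \texttt{input\_text} uniformly. The response includes its SHA-256 digest, which is a function of unpredictable input, so replaying an old response succeeds only with negligible probability. The response is also bound to the authenticated channel, or signed with $sk_{\mathsf{op}}$, so it cannot come from a different agent.

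We expect the main obstacle to be formalizing the readiness probe. Unlike the other checks, ``the agent is in a ready state'' is not a cryptographic predicate, so a corrupted agent that truly computes the probe answer is, by definition, ready for that probe. We would handle this by defining the state predicate operationally: the holder satisfies the predicate if it produced, within the timeout, a correct answer to a fresh probe under its own authenticated key. With this definition, state forgery reduces to producing a correct fresh answer without running the required tools and inference path, which the unpredictability of the input rules out, or to answering on another agent's behalf, which EUF-CMA rules out. The claim can be stated only relative to this operational definition, and we would make that explicit in the proof.
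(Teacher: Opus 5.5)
Your proposal is correct at the paper's level of rigor and follows essentially the same route as the paper's proof. Identity forgery is split the same way (VP-signature forgery, VC forgery or unsoundness, DID-key rebinding), and state forgery likewise (a signature-bound, operationally defined readiness check, plus an EUF-CMA/collision-resistance argument for context), with your version more explicit in places: for example, you reduce rebinding to EUF-CMA of $pk_{\mathsf{admin}}$ where the paper just appeals to correctness of the DID infrastructure, and you give assumption (iii) a role that the paper's proof leaves unused.

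One caveat applies equally to both: if the Holder itself is corrupted, which is exactly the case state forgery is meant to cover, the adversary holds $sk_{\mathsf{op}}$ and can simply sign $H(c)$ of the transcript it has observed, so the collision-resistance step only rules out context loss at honest-but-faulty Holders unless the context predicate is defined operationally, as you already do for readiness.
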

\begin{IEEEproof} 
    Let $\mathcal{A}$ be any PPT adversary interacting with AgentDID and an honest Verifier $V^*$. 
We consider the two bad events from the definition: identity forgery and state forgery. 
We show that each such event can only occur with negligible probability unless one of the assumptions is broken. 

{\bf Identity Forgery.}
Recall that an AI agent’s identity is anchored in a DID created via $\Pi_{\mathsf{DID}}.\mathsf{Create}$ and bound to an operational public key $pk_{\mathsf{op}}$ in the corresponding DID Document.
During authentication, the Holder constructs a VP using $\Pi_{\mathsf{VC}}.\mathsf{Present}$, embeds a fresh nonce generated by $V^*$, and signs the resulting presentation with the associated secret key $sk_{\mathsf{op}}$.
The honest Verifier $V^*$ then resolves the Holder’s DID using $\Pi_{\mathsf{DID}}.\mathsf{Resolve}$ to obtain $pk_{\mathsf{op}}$, verifies the signature on the VP under $pk_{\mathsf{op}}$, checks the embedded nonce for freshness, and verifies each enclosed credential via $\Pi_{\mathsf{VC}}.\mathsf{Verify}$ while enforcing issuer trust lists, subject binding, and validity periods.

Suppose $\mathcal{A}$ succeeds in an identity forgery.
Such an event can only arise in one of the following ways.
First, $\mathcal{A}$ may create a VP that passes signature verification under $pk_{\mathsf{op}}$ and includes the fresh nonce of $V^*$ without knowing $sk_{\mathsf{op}}$, which contradicts EUF-CMA security of the signature scheme.
Second, $\mathcal{A}$ may produce a VC that passes $\Pi_{\mathsf{VC}}.\mathsf{Verify}$ under a trusted Issuer’s public key but was never issued for that Holder and DID, which contradicts the soundness of $\Pi_{\mathsf{VC}}$.
Third, $\mathcal{A}$ may change the binding between a DID and its public key so that $\Pi_{\mathsf{DID}}.\mathsf{Resolve}$ returns an attacker-favorable key, which contradicts the correctness of the DID infrastructure assumed by the system.
%
% In addition, confirming the underlying model is treated as part of the agent’s static identity.
% Model provenance is validated during credential issuance by means of the publicly detectable watermarking scheme $\Pi_{\mathsf{PDW}}$.
% Any attempt to falsely claim a base model would require producing texts that are accepted by $\Pi_{\mathsf{PDW}}.\mathsf{Detect}$ for a model the adversary does not actually possess, which contradicts the assumed security of $\Pi_{\mathsf{PDW}}$.
%Therefore, misrepresenting the base model is also a special case of identity forgery.

In all cases, a successful identity forgery would allow us to construct a reduction that breaks one of the underlying assumptions with non-negligible probability.
Hence, the success probability of identity forgery is negligible in $\lambda$.
 
{\bf State Forgery.}
AgentDID verifies dynamic state along two dimensions: readiness (via probe tasks) and context consistency (via hashed context commitments). 
Unlike static attributes such as the underlying model or provenance, which are established during credential issuance, both readiness and the ability to exercise the required capabilities at the moment of interaction are treated as part of the Holder’s dynamic state.

\textit{Readiness.}
The readiness probe requires the Holder to complete a standardized probe task that combines instruction-following, tool invocation, and a timeout constraint. 
Each probe instance includes fresh, Verifier-chosen inputs and is bound to the Holder through signatures under $sk_{\mathsf{op}}$. 
An honest Verifier accepts the readiness predicate only if the probe response is timely and consistent with the specified behavior. 
If $\mathcal{A}$ could cause $V^*$ to accept a readiness predicate when the Holder is not actually in a state that satisfies it, then $\mathcal{A}$ must either forge a valid signed response on behalf of the Holder without executing the required actions, or simulate the expected tool behavior and protocol traces for capabilities that it does not possess, while still satisfying all syntactic and semantic checks. 
The first case contradicts signature unforgeability. 
The second case contradicts the threat model and the fact that readiness predicates are defined exactly in terms of externally observable behaviors. 
Therefore, the probability of a successful readiness-related state forgery is negligible.

\textit{Context consistency.}
For context consistency, both parties compute a hash of their current serialized context; the Holder sends $h_{\mathsf{holder}}$ signed under $sk_{\mathsf{op}}$, and the Verifier computes $h_{\mathsf{verifier}}$ over its own context and compares the two values. 
The Verifier accepts the context predicate only if the signature is valid and $h_{\mathsf{holder}} = h_{\mathsf{verifier}}$. 
A successful context-based state forgery would require that $V^*$ accepts while the two internal contexts differ. 
This can only happen if either $\mathcal{A}$ forges a signature on a hash value not computed from the Holder’s actual context, or finds two distinct contexts that hash to the same value. 
The first case violates EUF-CMA security of the signature scheme, and the second breaks the collision resistance of the hash function. 
Both are assumed to be infeasible for PPT adversaries, so the probability of such a forgery is negligible. 

%Finally, note that the effective ability to exercise the claimed tools or capabilities at interaction time forms part of the Holder’s dynamic state, since these capabilities are validated through the probe during the interaction rather than through static issuance. 
%Consequently, misrepresenting such runtime capabilities without being able to complete the probe is exactly a form of state forgery and ruled out by the above argument.  
% In all cases, a successful state forgery would allow us to construct a reduction that breaks one of the underlying assumptions with non negligible probability. 
% Hence, the success probability of state forgery is negligible in $\lambda$. 

% \paragraph{Model and capability misrepresentation.}
% In the identity generation phase, issuers verify model provenance and capabilities before issuing VCs. 
% Model claims are checked via the publicly detectable watermarking scheme $\Pi_{\mathsf{PDW}}$: the issuer uses a public detection key $k_{\mathsf{pub}}$ and fresh prompts to test the agent’s responses. 
% A successful misrepresentation (claiming an unowned model) would give an adversary that produces texts accepted by $\Pi_{\mathsf{PDW}}.\mathsf{Detect}$ for a model it does not actually use, contradicting the assumed security of $\Pi_{\mathsf{PDW}}$. 
% Similarly, misrepresenting tools or benchmarked capabilities while still obtaining valid VCs would break the soundness of the VC issuance and verification process.  

In summary, we have shown that any PPT adversary that achieves identity forgery or state forgery with non negligible probability can be used to construct a PPT algorithm that breaks one of the underlying assumptions. 
By contradiction, the probability that $\mathcal{A}$ causes an honest Verifier $V^*$ to accept either an identity forgery or a state forgery is negligible in $\lambda$.  
Hence, AgentDID is secure according to the definition. 
\end{IEEEproof}

\section{Implementation and Evaluation}  \label{sec:evaluation}
In this section, we describe the implementation of AgentDID and evaluate its performance through experiments. Our code repository is open sourced on GitHub\footnote{https://anonymous.4open.science/r/AgentDID/}. 

\subsection{Experimental Setup}
% Experiments are conducted on a virtual machine equipped with a 24-core Intel(R) Xeon(R) Silver 4214R CPU (providing 48 logical cores) and 64 GB of RAM, running the Ubuntu 22.04.1 LTS operating system. The prototype system is primarily developed using Python 3.11.14 to implement DID lifecycle management, the VC issuance process, and the identity authentication protocol. To ensure strict compatibility with W3C standards for DID resolution, we adopted a hybrid architecture: the core logic utilizes the \texttt{Web3.py} library for blockchain interactions and the \texttt{eth-account} library for ECDSA cryptographic operations, while the DID resolution module invokes Node.js (v18.20.8) scripts via Python subprocesses to leverage the official \texttt{ethr-did-resolver} library. The system integrates the LangChain and LangGraph frameworks to orchestrate the AI agent’s reasoning logic, employing the Qwen-Turbo and Qwen-Plus large models (via the DashScope API) as the underlying intelligence engines. On-chain experiments are deployed on the Ethereum Sepolia Testnet, interacting with the canonical ERC-1056 contract. 

Experiments are conducted on a virtual machine equipped with a 24-core Intel(R) Xeon(R) Silver 4214R CPU (48 logical cores) and 64\,GB of memory, running Ubuntu 22.04.1 LTS. 
The prototype system is primarily implemented in Python~3.11.14 and supports DID lifecycle management, verifiable credential issuance, and identity authentication workflows. 
The overall design conforms to the W3C DID specifications.  

The system adopts a hybrid implementation architecture. 
Core components are implemented in Python, leveraging the \texttt{Web3.py} library for blockchain interactions and the \texttt{eth-account} library for ECDSA-based cryptographic operations. 
For DID resolution, the system invokes Node.js (v18.20.8) modules via subprocess calls to integrate the official \texttt{ethr-did-resolver} implementation. For agent execution and coordination, the system integrates the LangChain and LangGraph frameworks to orchestrate reasoning workflows, and employs the Qwen-Turbo and Qwen-Plus large language models through the DashScope API as the underlying inference engines. 
All on-chain experiments are conducted on the Ethereum Sepolia testnet, interacting with the standard ERC-1056 registry contract\footnote{Address: \texttt{0x03d5003bf0e79C5F5223588F347ebA39AfbC3818}} for decentralized identifier management. 

Specifically, we describe the implementation details of several key design components as follows. 

\textbf{Identity Generation.}
The system incorporates a local key management component for cryptographic identity generation and protection. Specifically, we employ the \texttt{eth-account} library to generate ECDSA key pairs over the  \texttt{secp256k1} curve, which are securely persisted in a local JSON-based keystore. 
To enable VC issuance, we implement a lightweight Issuer service using the Flask framework. The Issuer maintains a set of standardized VC templates defined using JSON Schema and stored locally. Upon receiving a credential issuance request, the service instantiates the corresponding template by binding the applicant’s DID and produces a cryptographic proof using the Issuer’s private key. This process ensures that all issued credentials conform to the W3C VC data model and can be independently verified by external parties. 

{\bf Identity Authentication.}
To leverage the official \texttt{ethr-did-resolver}, the system performs DID resolution by invoking Node.js from the Python runtime. The associated overhead is mitigated through an in-memory caching layer, which reduces repeated invocations and redundant blockchain I/O. 
For P2P interactions, each agent exposes a lightweight HTTP endpoint to handle authentication requests. During verification, the runtime validates the VP by recovering the signing address from the VP proof using \texttt{eth-account} and matching it against the operational keys specified in the resolved DID Document. Additional checks enforce timestamp freshness and nonce consistency. 
The system further validates embedded VCs by recovering the signing addresses from their respective proofs and verifying them against the Issuer’s authorized keys. 

{\bf State Verification.} 
For the Readiness Probe, we use LangChain to orchestrate the agent’s reasoning process. Local tools (e.g., hash calculators) are exposed through Python decorators and invoked during execution. To validate the task outcome, the Verifier applies an LLM-as-a-Judge approach by running a parallel inference chain that checks whether the Holder’s natural language response aligns with the concrete outputs produced by the tools. This semantic check is complemented by the deterministic verification of the hash values. 
For the Context Consistency Check, the system enforces a canonical serialization of interaction history. We implement a canonicalization function that recursively orders JSON keys and removes whitespace before computing a SHA-256 hash. This guarantees that the same session state yields an identical digest across different runtime environments. The Verifier detects any modification of the interaction history by comparing the locally computed hash with the signature returned by the Holder.

\subsection{Performance Evaluation}
In this section, we present the experimental performance and provide a detailed analysis. 

{\bf Identify Generation.}
\begin{table}[b]
  \centering
  \caption{On-Chain Gas Consumption, Financial Costs, and Latency for Identity Initialization}
  \label{tab:onchain_costs}
  \begin{tabular*}{\columnwidth}{@{\extracolsep{\fill}} l c c c }
    \toprule
    \textbf{Operation} & \textbf{Gas Used} & \textbf{Cost (USD)} & \textbf{Latency (s)} \\
    \midrule
    DID Registration & 58,238 & \$0.88 & 15.37 \\
    \bottomrule
  \end{tabular*}
\end{table}
We evaluate the initialization cost and storage overhead associated with agent identity. The evaluation consists of 100 serial rounds of DID registration, including on-chain anchoring via implicit anchoring and the addition of operational keys through smart contract calls. For each round, we record the gas consumption and transaction latency of the corresponding on-chain operations. In addition, we measure the average storage overhead per VC.

Table~\ref{tab:onchain_costs} reports the on-chain cost of DID registration. The results show that on-chain write operations introduce a confirmation latency on the order of seconds (15.37 s), which originates from blockchain confirmation and occurs during the initialization phase. The experiments run on the Sepolia testnet. Financial cost estimates are computed using Ethereum Mainnet averages over the period from December 1, 2024 to November 30, 2025, based on an average gas price of 4.88 Gwei obtained from Etherscan\footnote{\url{https://etherscan.io/chart/gasprice}} 
 and an average ETH price of \$3,121.34 from CoinGecko\footnote{\url{https://www.coingecko.com/}} 
. Under these parameters, the estimated monetary cost per DID registration is below one dollar. In addition, the average storage overhead per VC is 1.23 KB. 

{\bf Concurrency Evaluation.} 
%\paragraph{Concurrency and Scalability Evaluation.} 
\begin{figure}[htp]
  \centering
\includegraphics[width=0.46\textwidth]{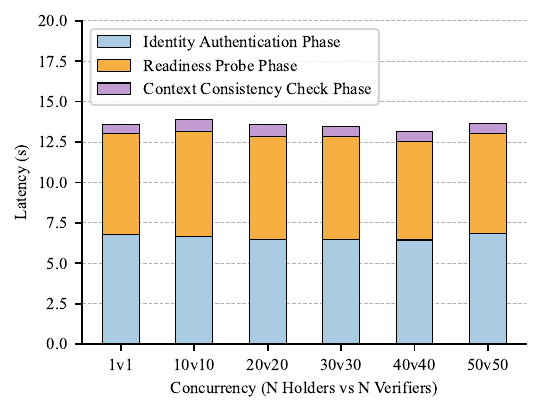}
  \caption{Latency breakdown across protocol phases under increasing concurrency. The x-axis denotes the number of concurrently authenticated agent pairs, and the y-axis reports the corresponding latency. 
} 
  \label{fig:latency_breakdown}
\end{figure}
\begin{figure}[t]
  \centering
  \includegraphics[width=0.46\textwidth]{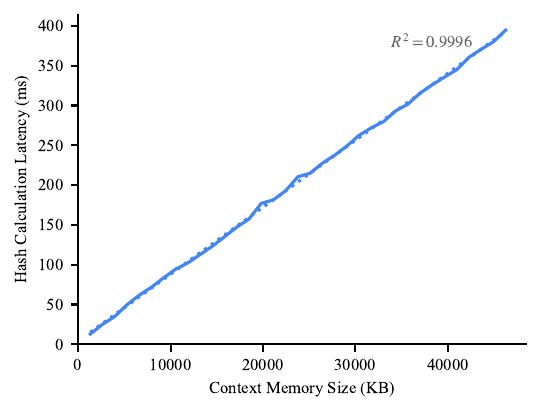}
  \caption{Latency of context hash computation under varying context sizes. The hash computation latency scales linearly with the size of the context memory ($R^2 = 0.9996$), remaining at the millisecond level even for large contexts, indicating that context consistency verification introduces negligible overhead.}
  \label{fig:context_consistency_verification}
\end{figure} 
\begin{figure}[b]
  \centering
  \includegraphics[width=0.46\textwidth]{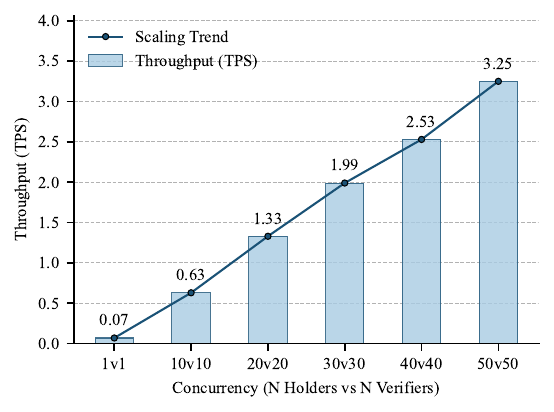}
  \caption{System throughput under increasing concurrency. The x-axis denotes the number of concurrent Holder--Verifier pairs, and the y-axis reports the system throughput. } 
  \label{fig:throughput_trend}
\end{figure} 
To evaluate the execution efficiency and scalability of the protocol under increasing agent populations, we conduct a concurrency experiment covering the complete protocol workflow. The experiment simulates a setting in which $N$ Holders interact with $N$ Verifiers concurrently, where $N \in \{1, 10, 20, 30, 40, 50\}$.  
Fig.~\ref{fig:latency_breakdown} reports the latency breakdown across protocol phases under increasing concurrency. %, showing that different phases exhibit distinct latency characteristics. 

For the Identity Authentication phase, the observed latency is approximately 6.5\,s, as indicated by the blue bars in Fig.~\ref{fig:latency_breakdown}. This latency is primarily incurred by blockchain-related operations, including on-chain DID resolution and Issuer DID resolution during VP verification. These steps require issuing queries to the blockchain and waiting for the corresponding responses and confirmations. As a result, the majority of the elapsed time is spent waiting for blockchain responses and network communication, while the time consumed by local protocol execution, such as cryptographic verification, constitutes only a small portion of the overall latency.  

For the Readiness Probe, we simulate a representative agent task in which the agent generates a one-sentence summary of a randomly selected text snippet, while concurrently invoking local tools to compute the hash of the text and retrieve the current system time. As indicated by the yellow bars in Fig.~\ref{fig:latency_breakdown}, the latency of this phase is comparable to that of the Identity Authentication phase. This latency is primarily incurred by waiting for responses from the external LLM API, as well as by the execution of tool-related logic during the task, rather than by blockchain interactions or local cryptographic processing.

For the Context Consistency Check, the latency contribution remains below one second across all concurrency levels, as shown by the purple bars in Fig.~\ref{fig:latency_breakdown}. This phase involves canonicalizing the session history and computing a cryptographic hash over the resulting representation.  To further quantify the cost of hashing large contexts, we conduct a microbenchmark that measures the time required to compute cryptographic hashes over increasing context sizes. The benchmark evaluates contexts of up to 10 million tokens (approximately 40\,MB), corresponding to the context window size supported by current long-context models. As shown in Fig.~\ref{fig:context_consistency_verification}, the hash computation latency increases linearly with the context size ($R^2 = 0.9996$). For contexts of approximately 40\,MB, the measured latency remains on the order of milliseconds.  

Overall, as the concurrency level increases from 1v1 to 50v50, the total protocol latency remains stable at approximately 13.5\,s.

Fig.~\ref{fig:throughput_trend} shows the system throughput under increasing concurrency. As the concurrency level scales from 1v1 to 50v50, the throughput rises from 0.07~TPS to 3.25~TPS and exhibits an approximately linear relationship with the number of concurrent Holder--Verifier pairs. This result shows that higher concurrency leads to a proportional increase in completed protocol executions per unit time.

\section{Conclusion}
\label{sec:conclusion}
This work studies identity authentication and state verification for AI agents in decentralized environments and proposes AgentDID as a concrete design that addresses these requirements. The framework shows how DIDs and VCs can be used to support self-managed agent identities and cross-system authentication without centralized control. By introducing a challenge–response mechanism, AgentDID further enables verification of execution-time conditions that are not captured by static credentials. The implementation and throughput experiments indicate that the proposed design supports concurrent identity authentication and state verification under workloads associated with large numbers of AI agents. 

In the future, we will focus on privacy considerations in AI agent interactions. During agent-to-agent and agent-to-service interactions, identity authentication and state verification may involve the exchange of information related to execution context, task progress, or capability availability. When such interactions occur repeatedly or across multiple parties, this information can reveal sensitive aspects of agent behavior and operational patterns. We will investigate how privacy can be preserved during agent interactions while still enabling correct identity authentication and state verification. 

\section*{Acknowledgment}
This study was supported by the National Natural Science Foundation of China (No. U25A20425, 62302266, 62232010, U23A20302, U24A20244), and the Research Project of Quancheng Laboratory, China under Grant No. QCL20250106.

\bibliographystyle{IEEEtran}
\bibliography{references}

\end{document}